\newcommand*{\mailto}[1]{\href{mailto:#1}{\nolinkurl{#1}}}
\newtheorem{theorem}{Theorem}[section]
\newtheorem{definition}[theorem]{Definition}
\newtheorem{proposition}[theorem]{Proposition}
\newtheorem{corollary}[theorem]{Corollary}
\newcommand{\R}{{\mathbb R}}
\newcommand{\N}{{\mathbb N}}
\newcommand{\C}{{\mathbb C}}
\newcommand{\E}{\mathrm{e}}
\newcommand{\sgn}{\mathrm{sgn}}
\newcommand{\im}{\mathrm{Im}}
\newcommand{\oo}{o}
\numberwithin{equation}{section}
\begin{document}

\title[A coupling problem for entire functions]{A coupling problem for entire functions and its application to the long-time asymptotics of integrable wave equations}

\author[J.\ Eckhardt]{Jonathan Eckhardt}
\address{Faculty of Mathematics\\ University of Vienna\\ Oskar-Morgenstern-Platz 1\\ 1090 Wien\\ Austria}
\email{\mailto{jonathan.eckhardt@univie.ac.at}}
\urladdr{\url{http://homepage.univie.ac.at/jonathan.eckhardt/}}

\author[G.\ Teschl]{Gerald Teschl}
\address{Faculty of Mathematics\\ University of Vienna\\
Oskar-Morgenstern-Platz 1\\ 1090 Wien\\ Austria\\ and International
Erwin Schr\"odinger
Institute for Mathematical Physics\\ Boltzmanngasse 9\\ 1090 Wien\\ Austria}
\email{\mailto{Gerald.Teschl@univie.ac.at}}
\urladdr{\url{http://www.mat.univie.ac.at/~gerald/}}

\thanks{Nonlinearity {\bf 29}, 1036--1046 (2016)}
\thanks{{\it Research supported by the Austrian Science Fund (FWF) under Grant No.\ Y330 and by the AXA Research Fund under the Mittag-Leffler Fellowship Project}}

\keywords{Coupling problem, long-time asymptotics, Camassa--Holm equation}
\subjclass[2010]{Primary 37K40, 35Q35; Secondary 30D20, 37K20}

\begin{abstract}
We propose a novel technique for analyzing the long-time asymptotics of integrable wave equations in the case when the
underlying isospectral problem has purely discrete spectrum. To this end, we introduce a natural coupling problem for entire functions,
which serves as a replacement for the usual Riemann--Hilbert problem, which does not apply in these cases. 
As a prototypical example, we investigate the long-time asymptotics of the dispersionless Camassa--Holm equation.
\end{abstract}

\maketitle

\section{Introduction}

Integrable wave equations play a key role in understanding numerous phenomena in science. In this connection, understanding
the long-time asymptotics of solutions is crucial. Roughly speaking, the typical behavior is that any (sufficiently fast) decaying
initial profile splits into a number of solitons plus a decaying dispersive part. This has been first observed numerically for the
Korteweg--de Vries equation \cite{zakr}. Corresponding asymptotic formulas were derived and justified with increasing level of rigor
over the last thirty years. To date, the most powerful method for deriving such long-time asymptotics is the nonlinear steepest
descent method from Deift and Zhou \cite{dz}, which was inspired by earlier work of Manakov \cite{ma} and Its \cite{its1}.
More on this method and its history can be found in the survey \cite{diz}; an expository introduction to this method
for the Korteweg--de Vries equation can be found in \cite{grte}.

Although this method has found to be applicable to a wide
range of integrable wave equations, there are still some exceptions. The most notable one is the Camassa--Holm 
equation, also known as the dispersive shallow water equation,
\begin{equation}\label{ch.eq}
u_{t} -u_{xxt}  +2\varkappa u_{x} = 2u_x u_{xx} - 3uu_x +u u_{xxx},\quad  x,\, t\in\R,
\end{equation}
where $u\equiv u(x,t)$ is the fluid velocity in the $x$ direction, $\varkappa\geq 0$ is a constant related to the critical shallow water
wave speed, and subscripts denote partial derivatives. It was first introduced by Camassa and Holm in \cite{ch}
and Camassa et al.\ \cite{chh} as a model for shallow water waves, but it actually already appeared earlier in a list by Fuchssteiner and Fokas \cite{ff}. Regarding the hydrodynamical relevance of this equation, let us also mention the more recent articles by Johnson \cite{jo}, Ionescu-Kruse \cite{io07} as well as Constantin and Lannes \cite{cla}.

While in the case $\varkappa>0$ there is an underlying Riemann--Hilbert problem which can be analyzed using the
nonlinear steepest descent method \cite{bosh}, \cite{bosh2}, \cite{boitsh}, \cite{bokoshte} (cf.\ also \cite{cogeiv06} where a related additive Riemann--Hilbert problem is mentioned),
this breaks down in the limiting case $\varkappa=0$. In this case, the solitons are no longer smooth but have a single peak and hence are also known as peakons. 
Nevertheless, it was conjectured by McKean \cite{mckean} (cf.\ also \cite{mckean2}, \cite{mckean3}) that solutions split into a train of peakons, in accordance with earlier numerical observations by Camassa et al.\ \cite{chh}. 
However, apart from the multi-peakon case \cite{bss} (and some low-regularity solutions \cite{li} as well as for a simplified flow \cite{lou}), this has been an open problem,  resolved only recently by us in \cite{IsospecCH}. The technical problem here stems from the fact that the underlying isospectral problem has
purely discrete spectrum and hence it is no longer possible to set up the usual scattering theory. Our approach in \cite{IsospecCH}
circumvented this shortcoming by a thorough investigation of the associated isospectral problem, which then allowed us to deduce 
long-time asymptotics. However, this approach has still some drawbacks. 
For example, it is not possible to obtain long-time asymptotics which hold uniformly in sectors. 

The aim of the present article is to propose a novel approach to such kind of problems, which seems to be more natural. 
In some sense, it can be thought of as an adaptation of the usual Riemann--Hilbert problem approach. 
More precisely, we will replace the Riemann--Hilbert problem with a certain coupling problem for entire functions. 
Consequently, we will investigate the asymptotic behavior of solutions to this problem under known behavior of the given data.  

As a prototypical example, we will apply our results to derive long-time asymptotics for the dispersionless Camassa--Holm equation. 
However, we expect that this new technique will also work for other equations, whose underlying isospectral problem exhibits purely discrete spectrum.
 For example, it can immediately be applied to derive long-time asymptotics for corresponding equations in the whole Camassa--Holm hierarchy. 
 While for the positive members of this hierarchy one gets qualitatively the same asymptotic picture, the situation is somewhat different for the negative ones (including for instance the extended Dym equation). 
 Although solutions of negative members of the Camassa--Holm hierarchy still split into a train of peakons, their speed will be proportional to the modulus of the corresponding eigenvalue. 
 This causes the larger peakons to be the slower ones and the smaller peakons to be the faster ones, creating a qualitatively different picture.

% % % % % % % % % % % %
\section{Coupling problem}
% % % % % % % % % % % %

The purpose of this section is to introduce the notion of a {\em coupling problem} for entire functions. 
To this end, consider a fixed discrete set $\sigma\subseteq\R$ such that the sum
\begin{align}\label{eqnTraceClass}
 \sum_{\lambda\in\sigma} \frac{1}{|\lambda|} 
\end{align}
is finite. 
It is well known that under this condition, the infinite product 
\begin{align}\label{eqnDefWronski}
 W(z) = \prod_{\lambda\in\sigma} \biggl( 1-\frac{z}{\lambda} \biggr), \quad z\in\C,
\end{align}
converges locally uniformly to an entire function of exponential type zero \cite[Lemma 2.10.13]{bo54}, \cite[Theorem~5.3]{le96}. 
Furthermore, we introduce the quantities $\eta_\lambda\in\R\cup\lbrace\infty\rbrace$ for each $\lambda\in\sigma$ which are referred to as the {\em coupling constants}. 

\begin{definition}
 A solution of the coupling problem with data $\lbrace \eta_\lambda \rbrace_{\lambda\in\sigma}$ is a pair of entire functions $(\Phi_-,\Phi_+)$ of exponential type zero such that the following three conditions are satisfied: 
\begin{enumerate}[label=\emph{(\roman*)}, leftmargin=*, widest=W]
\item[{\em (C)}] Coupling condition:
\begin{align*}
 \Phi_+(\lambda) = \eta_\lambda \Phi_-(\lambda), \quad \lambda\in\sigma
\end{align*}
\item[{\em (G)}] Growth and positivity condition: 
\begin{align*}
  \im \left( \frac{z\, \Phi_-(z) \Phi_+(z)}{W(z)}\right) \geq 0, \quad \im(z) > 0
\end{align*}
\item[{\em (N)}] Normalization condition: 
\begin{align*}
 \Phi_-(0) = \Phi_+(0) = 1
\end{align*}
\end{enumerate}
\end{definition}

In order to be precise, if $\eta_\lambda = \infty$ for some $\lambda\in\sigma$, then the coupling condition (C) in this definition has to be read as $\Phi_-(\lambda)=0$. 
The growth and positivity condition (G) means that the meromorphic function
\begin{align}\label{eqnGHN}
 \frac{z\, \Phi_-(z) \Phi_+(z)}{W(z)}, \quad z\in\C\backslash\R,
\end{align}
is a so-called Herglotz--Nevanlinna function, which satisfy certain growth restrictions (to be seen from their integral representations; \cite[Chapter~6]{akgl93}, \cite[Chapter~5]{roro94}). 
Moreover, let us mention that since the residues of such a function are known to be nonpositive, condition (G) also requires the necessary presumption  
\begin{align}\label{eqnHNRes}
   \frac{\lambda\, \Phi_-(\lambda)^2}{\dot{W}(\lambda)}  \eta_\lambda \leq 0, \quad \lambda\in\sigma, 
\end{align}
on the sign of all coupling constants which are finite. 
Thus the coupling constants corresponding to the smallest (in modulus) positive and negative element of $\sigma$ have to be nonnegative. 
The consecutive coupling constants then have to be alternating nonpositive and nonnegative. 
Furthermore, the condition (G) also tells us that the zeros of the numerator and the denominator of the function in~\eqref{eqnGHN} are interlacing (after cancelation) \cite[Theorem~27.2.1]{le96}. 
In particular, this guarantees that the sums  
\begin{align}
 \sum_{\mu\in\rho_\pm} \frac{1}{|\mu|}
\end{align}
are finite, where $\rho_\pm$ denote the sets of all (necessarily simple) zeros of the functions $\Phi_\pm$.  
As a consequence, these functions can be written as the canonical products 
\begin{align}\label{eqnPhiProd}
 \Phi_\pm(z) = \prod_{\mu\in\rho_\pm} \biggl( 1 - \frac{z}{\mu} \biggr), \quad z\in\C,
\end{align}
bearing in mind the normalization condition (N). 
Finally, we mention the bounds 
\begin{align}\label{eqnPhiBound}
 |\Phi_\pm(z)| \leq \prod_{\lambda\in\sigma} \biggl( 1 + \frac{|z|}{|\lambda|} \biggr), \quad z\in\C,
\end{align}
upon roughly estimating~\eqref{eqnPhiProd} and employing the interlacing condition once more. 

Obtaining existence and uniqueness results for the coupling problem is an intricate task which is essentially equivalent to solving the inverse problem for the isospectral problem of the Camassa--Holm equation. 
However, in the simple case when the set $\sigma$ consists of only one point, it is indeed possible to write down the solution explicitly in terms of the one single coupling constant. 

\begin{proposition}\label{propOneSol}
 Suppose that $\sigma = \lbrace \lambda_0 \rbrace$ for some nonzero $\lambda_0\in\R$. 
 If the coupling constant $\eta_{\lambda_0}\in\R\cup\lbrace\infty\rbrace$ is not negative, then the coupling problem has a unique solution given by
 \begin{align}
  \Phi_\pm(z) = \left( 1 - z \frac{1- \min\left(1,\eta_{\lambda_0}^{\pm 1}\right)}{\lambda_0} \right), \quad z\in\C.
 \end{align}
\end{proposition}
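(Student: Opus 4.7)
The plan is to establish existence and uniqueness separately, both hinging on the rational function $F(z)=z\Phi_-(z)\Phi_+(z)/W(z)$ appearing in condition~(G). For existence, I would verify directly that the stated formula defines a solution. Normalization~(N) is built into the formula. For condition~(C), split according to whether $\eta_{\lambda_0}\leq 1$ (then $\Phi_-\equiv 1$ and $\Phi_+(z)=1-z(1-\eta_{\lambda_0})/\lambda_0$, so that $\Phi_+(\lambda_0)=\eta_{\lambda_0}$) or $\eta_{\lambda_0}\geq 1$ (mirror case, with $\eta_{\lambda_0}=\infty$ absorbed here). For condition~(G), compute the partial fraction decomposition of $F$, which takes the form $\alpha z+\beta+\gamma/(\lambda_0-z)$; such a rational function is Herglotz--Nevanlinna precisely when $\alpha,\gamma\geq 0$, and a direct calculation shows that these sign conditions follow from $\eta_{\lambda_0}\geq 0$.

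For uniqueness, let $(\Phi_-,\Phi_+)$ be any solution. The canonical product representation~\eqref{eqnPhiProd} and the interlacing of zeros of numerator and denominator of $F$ (after cancellation) recalled in the preamble, combined with the fact that $W$ has only the single zero $\lambda_0$, imply that $\Phi_-\Phi_+$ has at most one zero besides the zero of $F$ at $z=0$. Thus one of $\Phi_-,\Phi_+$ is identically $1$ by normalization, while the other has the form $1-z/\mu$ for some $\mu\in\R\cup\lbrace\infty\rbrace$ (with $\mu=\infty$ meaning $\equiv 1$ and yielding $\eta_{\lambda_0}=1$ via~(C)). In the non-trivial case, (C) pins down $\mu$ as either $\lambda_0/(1-\eta_{\lambda_0})$ (if $\Phi_+$ carries the zero) or $\lambda_0\eta_{\lambda_0}/(\eta_{\lambda_0}-1)$ (if $\Phi_-$ does), and the sign requirements on the partial fraction coefficients -- namely $\lambda_0/\mu\geq 0$ and $(\mu-\lambda_0)/\mu\geq 0$ -- single out precisely one of these two placements for each admissible $\eta_{\lambda_0}$.

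The boundary cases $\eta_{\lambda_0}\in\lbrace 0,\infty\rbrace$ require a minor separate argument, since there coupling forces $\Phi_+(\lambda_0)=0$ or $\Phi_-(\lambda_0)=0$, and $F$ becomes entire after cancellation of the factor $1-z/\lambda_0$. Any entire Herglotz--Nevanlinna function is affine, and $F(0)=0$ reduces it to $F(z)=bz$; that $\Phi_\pm$ are of exponential type zero with value $1$ at the origin then forces the surviving factors to be identically $1$, recovering the stated formula. The main anticipated obstacle is the bookkeeping of signs across the various subcases (depending on the sign of $\lambda_0$ and the range of $\eta_{\lambda_0}$); the structural ingredients -- canonical products, interlacing of zeros and poles of Herglotz--Nevanlinna functions, and the affine form of entire Herglotz--Nevanlinna functions -- are standard and already recalled in the preamble.
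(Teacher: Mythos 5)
Your proposal is correct and follows essentially the same route as the paper's proof: both arguments hinge on the fact that the Herglotz--Nevanlinna function $z\,\Phi_-(z)\Phi_+(z)/W(z)$ has a single pole and hence at most two simple zeros, which together with the canonical product representation forces $\Phi_\pm$ to be linear with at most one of them nonconstant, after which the coupling condition and the interlacing/sign constraints (your conditions $\lambda_0/\mu\geq 0$ and $(\mu-\lambda_0)/\mu\geq 0$ are exactly the paper's $0\leq a_\pm\lambda_0\leq 1$) pin down the coefficients. The only cosmetic differences are that you spell out the existence verification via partial fractions and treat $\eta_{\lambda_0}\in\lbrace 0,\infty\rbrace$ separately, whereas the paper absorbs these endpoints into the two cases $\eta_{\lambda_0}\leq 1$ and $\eta_{\lambda_0}\geq 1$.
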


\begin{proof}
 It is readily verified that the given polynomials are indeed a solution of the coupling problem.
 Conversely, if $(\Psi_-,\Psi_+)$ is another solution, then 
 \begin{align*}
  \Psi_\pm(z) = 1 - a_\pm z, \quad z\in\C,
 \end{align*}  
 for some $a_\pm\in\R$ with $a_- a_+ = 0$. 
 Moreover, we infer that $0\leq a_\pm \lambda_0 \leq 1$ in view of the Herglotz--Nevanlinna property (more precisely, from the interlacing condition of the poles and zeros).
 Lastly, the coupling condition (C) takes the form   
 \begin{align*}
  1 - a_+ \lambda_0 = \eta_{\lambda_0} \left( 1 - a_- \lambda_0 \right). 
 \end{align*}
 Now if $\eta_{\lambda_0}\leq 1$, then necessarily $a_-=0$ since otherwise we get the contradiction
 \begin{align*}
  1 = \eta_{\lambda_0} (1-a_- \lambda_0) < 1.
 \end{align*} 
 Consequently, we may express $a_+$ in terms of the coupling constant using the coupling condition. 
 In much the same manner, one may obtain the coefficients $a_\pm$ if $\eta_{\lambda_0}\geq 1$ and finally end up with  
 \begin{align*}
  a_\pm = \frac{1- \min\left(1,\eta_{\lambda_0}^{\pm 1}\right)}{\lambda_0}
 \end{align*}
 in either case, which finishes the proof. 
\end{proof}

Note that there is no solution of the coupling problem in Proposition~\ref{propOneSol} if the coupling constant is negative, since it would violate the positivity condition (G).

% % % % % % % % % % % %
\section{Asymptotic analysis}
% % % % % % % % % % % %

We shall now derive a general result on the asymptotic behavior of solutions to the coupling problem.
Therefore, let $\Delta$ be a first-countable topological space (that is, every point has a countable neighborhood basis) and fix some $\delta_\infty\in\Delta$.
Again, we denote with $\sigma\subseteq\R$  a discrete set such that the sum~\eqref{eqnTraceClass} is finite and define the entire function $W$ by~\eqref{eqnDefWronski}. 
Moreover, for every $\delta\in\Delta$ we consider a set of coupling constants $\eta_\lambda(\delta)\in\R\cup\lbrace\infty\rbrace$ indexed by $\lambda\in\sigma$. 

\begin{theorem}\label{thmAA}
 Suppose there is a partition $\sigma_- \cup \lbrace\lambda_0\rbrace \cup \sigma_+$ of $\sigma$  such that   
 \begin{align}\label{eqnConvCC}
  -\ln |\eta_{\lambda}(\delta)| \rightarrow  \pm\infty 
 \end{align}
 as $\delta\rightarrow\delta_\infty$ for each $\lambda\in\sigma_\pm$ and define the conjugated coupling constants by 
 \begin{align}
  \hat{\eta}_{\lambda_0}(\delta) = \eta_{\lambda_0}(\delta) \prod_{\lambda\in\sigma_-}\biggl(1-\frac{\lambda_0}{\lambda}\biggr) \prod_{\lambda\in\sigma_+}\biggl(1-\frac{\lambda_0}{\lambda}\biggr)^{-1}, \quad \delta\in\Delta. 
 \end{align}
 If the pairs $(\Phi_-(\,\cdot\,,\delta),\Phi_+(\,\cdot\,,\delta))$ are solutions of the coupling problem with data $\lbrace \eta_{\lambda}(\delta)\rbrace_{\lambda\in\sigma}$  for every $\delta\in \Delta$, then 
 \begin{align}\label{eqnConv}
  \frac{\Phi_-(z,\delta)\Phi_+(z,\delta)}{W(z)} = 1 + \frac{z}{\lambda_0- z}  \min\left(\hat{\eta}_{\lambda_0}(\delta)^{-1},\hat{\eta}_{\lambda_0}(\delta)\right) + \oo(1)
 \end{align}
  as $\delta\rightarrow\delta_\infty$, locally uniformly for all $z\in\C\backslash\sigma$. 
\end{theorem}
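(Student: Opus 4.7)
The proof is a normal-family compactness argument that reduces everything to the one-point case supplied by Proposition~\ref{propOneSol}. The bound~\eqref{eqnPhiBound} provides a $\delta$-independent entire majorant, so $\{\Phi_\pm(\,\cdot\,,\delta)\}_{\delta\in\Delta}$ is normal on $\C$. Since $\hat\eta_{\lambda_0}(\delta)\in[0,\infty]$ by the sign analysis behind~\eqref{eqnHNRes} and $[0,\infty]$ is compact, any sequence $\delta_n\to\delta_\infty$ has a subsequence along which $\Phi_\pm(\,\cdot\,,\delta_n)\to\tilde\Phi_\pm$ locally uniformly on $\C$ with $\tilde\Phi_\pm$ entire of exponential type zero and $\tilde\Phi_\pm(0)=1$, and along which $\hat\eta_{\lambda_0}(\delta_n)\to\beta\in[0,\infty]$. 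A standard contradiction argument then reduces the theorem to identifying each such limit as $\tilde\Phi_-\tilde\Phi_+/W = 1 + z\min(\beta,\beta^{-1})/(\lambda_0-z)$.

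For $\lambda\in\sigma_+$ the hypothesis $|\eta_\lambda(\delta_n)|\to 0$ combined with the uniform bound on $\Phi_-(\lambda,\delta_n)$ forces $\tilde\Phi_+(\lambda)=0$ via~(C); symmetrically $\tilde\Phi_-(\lambda)=0$ for $\lambda\in\sigma_-$. Setting $P_\pm(z)=\prod_{\lambda\in\sigma_\pm}(1-z/\lambda)$, I therefore factor $\tilde\Phi_\pm = P_\pm \tilde\psi_\pm$ with $\tilde\psi_\pm$ entire of exponential type zero satisfying $\tilde\psi_\pm(0)=1$. Because $W = P_-P_+(1-z/\lambda_0)$, the target ratio simplifies to $\tilde\Phi_-\tilde\Phi_+/W = \tilde\psi_-\tilde\psi_+/(1-z/\lambda_0)$.

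The crux is to show that $\tilde\psi_-\tilde\psi_+$ is an affine polynomial. Condition~(G) is preserved under locally uniform limits on $\C_+$, so $H(z)=z\tilde\psi_-(z)\tilde\psi_+(z)/(1-z/\lambda_0)$ is Herglotz. It is meromorphic on $\C$ with at most a simple pole at $\lambda_0$, real on $\R\setminus\{\lambda_0\}$, and vanishes at the origin. Subtracting its (Herglotz) pole contribution $c/(\lambda_0-z)$ with $c\geq 0$ yields an entire Herglotz function, which must be affine of the form $\alpha+bz$ with $b\geq 0$; the conditions $H(0)=0$ and $\tilde\psi_\pm(0)=1$ then give, after a short computation, $\tilde\psi_-(z)\tilde\psi_+(z)=1-bz/\lambda_0$ with $b\in[0,1]$. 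Since each $\tilde\psi_\pm$ is entire of exponential type zero and their product has at most one zero, Hadamard factorization forces each $\tilde\psi_\pm$ to be a polynomial of degree at most one.

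Finally, the coupling condition at $\lambda_0$ can be rewritten in the conjugated form
\[
\frac{\Phi_+(\lambda_0,\delta_n)}{P_+(\lambda_0)} \;=\; \hat\eta_{\lambda_0}(\delta_n)\,\frac{\Phi_-(\lambda_0,\delta_n)}{P_-(\lambda_0)},
\]
which is legitimate since $\lambda_0\notin\sigma_\pm$. Passing to the limit produces $\tilde\psi_+(\lambda_0)=\beta\,\tilde\psi_-(\lambda_0)$, read as $\tilde\psi_-(\lambda_0)=0$ when $\beta=\infty$. Combined with the affine structure and normalization from the preceding step, $(\tilde\psi_-,\tilde\psi_+)$ is exactly the unique solution of the one-point coupling problem with set $\{\lambda_0\}$ and constant $\beta$ furnished by Proposition~\ref{propOneSol}; substituting into the ratio from the second paragraph yields the claimed right-hand side of~\eqref{eqnConv}. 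The main obstacle is the Herglotz rigidity: one has to simultaneously exploit the Herglotz property on $\C_+$, the meromorphic extension across $\R\setminus\{\lambda_0\}$, and the exponential-type-zero constraint in order to rule out any extra zeros or poles in the limiting factors $\tilde\psi_\pm$; once this is in place, the rest of the argument is a compactness-plus-uniqueness formality.
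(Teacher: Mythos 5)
Your proposal is correct and follows essentially the same route as the paper's proof: a Montel/normal-family compactness argument, forcing the limit functions to vanish on $\sigma_\pm$ via the coupling condition and the uniform bound~\eqref{eqnPhiBound}, factoring out the products over $\sigma_\pm$, and identifying the remaining pair as the unique solution of the one-point coupling problem from Proposition~\ref{propOneSol}, followed by the standard contradiction argument. Your explicit Herglotz-representation computation showing that $\tilde\psi_-\tilde\psi_+$ is affine is just an unpacking of the step the paper compresses into ``the function~\eqref{eqnPsiHN} has only one pole and thus at most two zeros.''
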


\begin{proof}
First consider a sequence $\delta_k\in \Delta$, $k\in\N$ with $\delta_k\rightarrow\delta_\infty$ as $k\rightarrow\infty$ such that the entire functions $\Phi_\pm(\,\cdot\,,\delta_k)$ converge locally uniformly as $k\rightarrow\infty$.
The respective limits are entire functions of exponential type zero in view of~\eqref{eqnPhiBound} and will be denoted by $\Psi_\pm$. 
Due to assumption~\eqref{eqnConvCC} and the coupling condition, we conclude that $\Psi_\pm(\lambda) = 0$ for $\lambda\in\sigma_\pm$ (also observe that the quantities $\Phi_\pm(\lambda,\delta)$ are uniformly bounded in $\delta\in\Delta$). 
As a consequence, the meromorphic Herglotz--Nevanlinna function 
\begin{align}\label{eqnPsiHN}
 \frac{z\, \Psi_-(z) \Psi_+(z)}{W(z)}, \quad z\in\C\backslash\R,
\end{align}
 has only one pole and thus at most two zeros, which are necessarily simple. 
 Consequently, we may write (keep in mind that these functions are of exponential type zero and that their zeros have genus zero) 
\begin{align*}
 \Psi_\pm(z) = P_\pm(z) \prod_{\lambda\in\sigma_\pm} \biggl( 1-\frac{z}{\lambda} \biggr), \quad z\in\C,
\end{align*}
where $P_\pm$ are polynomials such that $P_- P_+$ has at most one zero, which is simple. 
Moreover, the pair $(P_-,P_+)$ satisfies the coupling condition
\begin{align*}
 P_+(\lambda_0) = \eta_{\lambda_0,\infty}  P_-(\lambda_0), 
\end{align*}
where the constant $\eta_{\lambda_0,\infty}\in\R\cup\lbrace \infty\rbrace$ is given as the limit
\begin{align*}
 \eta_{\lambda_0,\infty} = \prod_{\lambda\in\sigma_-} \biggl( 1 - \frac{\lambda_0}{\lambda}\biggr) \prod_{\lambda\in\sigma_+} \biggl( 1 - \frac{\lambda_0}{\lambda}\biggr)^{-1} \lim_{k\rightarrow\infty} \Phi_+(\lambda_0,\delta_k) \Phi_-(\lambda_0,\delta_k)^{-1}. 
\end{align*}
Hereby note that the limit is nonnegative because of~\eqref{eqnHNRes}. 
In view of Proposition~\ref{propOneSol} we now may write down the polynomials $P_\pm$ explicitly and conclude that
\begin{align*}
 \frac{\Phi_-(z,\delta_k) \Phi_+(z,\delta_k)}{W(z)} \rightarrow 1 + \frac{z}{\lambda_0 - z} \min\left( \eta_{\lambda_0,\infty}^{-1}, \eta_{\lambda_0,\infty} \right)
\end{align*}
as $k\rightarrow\infty$, locally uniformly in $z\in\C\backslash\sigma$. 
Finally, from the very definition of the constants $\eta_{\lambda_0,\infty}$ we may also rewrite this as  
\begin{align*}
 \frac{\Phi_-(z,\delta_k) \Phi_+(z,\delta_k)}{W(z)} =  1 + \frac{z}{\lambda_0- z}  \min\left(\hat{\eta}_{\lambda_0}(\delta_k)^{-1},\hat{\eta}_{\lambda_0}(\delta_k)\right) + \oo(1)
\end{align*}
as $k\rightarrow\infty$, locally uniformly in $z\in\C\backslash\sigma$. 

Finally, if the claim of the theorem was not true, then there would be a compact set $K\subseteq\C\backslash\sigma$ and a subsequence $\delta_k\in \Delta$, $k\in\N$ with $\delta_k\rightarrow\delta_\infty$ as $k\rightarrow\infty$ such that 
\begin{align}\label{eqnEps}
 \left|   \frac{\Phi_-(z,\delta_k) \Phi_+(z,\delta_k)}{W(z)} - 1 - \frac{z}{\lambda_0- z}  \min\left(\hat{\eta}_{\lambda_0}(\delta_k)^{-1},\hat{\eta}_{\lambda_0}(\delta_k)\right) \right| > \varepsilon  
\end{align}
for all $z\in K$,  $k\in\N$ and some $\varepsilon>0$. 
However, a compactness argument (recall~\eqref{eqnPhiBound} and apply Montel's theorem) shows that there is a subsequence $\delta_{k_l}$ such that $\Phi_\pm(\,\cdot\,,\delta_{k_l})$ converges locally uniformly as $l\rightarrow\infty$. 
In view of the first part of the proof, this gives a contradiction to~\eqref{eqnEps}.
\end{proof}

The assumptions in Theorem~\ref{thmAA} allow one of the coupling constants to be arbitrary.  
This will turn out to be crucial to obtain long-time asymptotics of the Camassa--Holm equation which are valid uniformly in sectors. 
 However, in the case when all of the coupling constants are supposed to converge to zero or infinity, one obtains the following result. 

\begin{corollary}\label{corAA}
 Suppose that we have $|\ln|\eta_\lambda(\delta)|| \rightarrow \infty$ as $\delta\rightarrow\delta_\infty$ for every $\lambda\in\sigma$. 
  If the pairs $(\Phi_-(\,\cdot\,,\delta),\Phi_+(\,\cdot\,,\delta))$ are solutions of the coupling problem with data $\lbrace \eta_{\lambda}(\delta)\rbrace_{\lambda\in\sigma}$  for every $\delta\in \Delta$, then 
 \begin{align}
  \frac{\Phi_-(z,\delta)\Phi_+(z,\delta)}{W(z)} \rightarrow 1
 \end{align}
  as $\delta\rightarrow\delta_\infty$, locally uniformly for all $z\in\C\backslash\sigma$. 
\end{corollary}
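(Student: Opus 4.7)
The plan is to deduce the corollary from Theorem~\ref{thmAA} by a contradiction-and-extraction argument. Since $\Delta$ is first-countable, convergence can be tested along sequences, and the countability of $\sigma$ will let one diagonalize over the coupling constraints.

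First I would dispose of the degenerate case $\sigma=\emptyset$ (where $W\equiv 1$ and the conditions (G) and (N) force $\Phi_\pm\equiv 1$), and then fix an arbitrary $\lambda_0\in\sigma$. Assuming the corollary fails, first-countability supplies a sequence $\delta_k\to\delta_\infty$, a compact set $K\subseteq\C\setminus\sigma$, and $\varepsilon>0$ with
\[
 \sup_{z\in K}\biggl|\frac{\Phi_-(z,\delta_k)\Phi_+(z,\delta_k)}{W(z)} - 1\biggr| > \varepsilon, \quad k\in\N.
\]
Because $\sigma$ is countable and $|\ln|\eta_\lambda(\delta)||\to\infty$ for every $\lambda\in\sigma$, a standard diagonal extraction (using sequential compactness of $[-\infty,+\infty]$) yields a subsequence, still denoted $\delta_k$, along which $-\ln|\eta_\lambda(\delta_k)|$ tends to a definite value in $\{+\infty,-\infty\}$ for each $\lambda\in\sigma\setminus\{\lambda_0\}$. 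Partitioning $\sigma\setminus\{\lambda_0\}=\sigma_+\cup\sigma_-$ accordingly, the hypothesis~\eqref{eqnConvCC} of Theorem~\ref{thmAA} is met, and its conclusion reads
\[
 \frac{\Phi_-(z,\delta_k)\Phi_+(z,\delta_k)}{W(z)} = 1 + \frac{z}{\lambda_0-z}\min\bigl(\hat{\eta}_{\lambda_0}(\delta_k)^{-1},\hat{\eta}_{\lambda_0}(\delta_k)\bigr) + \oo(1)
\]
locally uniformly on $\C\setminus\sigma$.

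The last step is to check that the remainder term vanishes. The conjugation factor multiplying $\eta_{\lambda_0}(\delta_k)$ in $\hat{\eta}_{\lambda_0}(\delta_k)$ is an absolutely convergent product of nonzero reals (because $\sum_\lambda 1/|\lambda|<\infty$ and $\lambda_0\notin\sigma_\pm$), hence a fixed nonzero real number; consequently $|\ln|\hat{\eta}_{\lambda_0}(\delta_k)||\to\infty$ by the hypothesis at $\lambda_0$. A final subsequence extraction fixes the sign of this logarithm, and combining with the nonnegativity of $\hat{\eta}_{\lambda_0}(\delta_k)$ dictated by~\eqref{eqnHNRes} (the same observation used inside the proof of Theorem~\ref{thmAA}), one obtains $\hat{\eta}_{\lambda_0}(\delta_k)\to 0$ or $+\infty$. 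In either event $\min(\hat{\eta}_{\lambda_0}(\delta_k)^{-1},\hat{\eta}_{\lambda_0}(\delta_k))\to 0$, contradicting the separation from $1$ on $K$. The main obstacle is really just the iterated diagonal extraction, which must simultaneously control all countably many coupling constants while preserving the assumed failure on $K$; once that is in place, Theorem~\ref{thmAA} does all the analytic work.
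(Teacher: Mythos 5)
Your proof is correct, but it follows a genuinely different route from the paper's. The paper does not apply Theorem~\ref{thmAA} as a black box: it reruns the first half of that theorem's proof, observing that along any locally uniformly convergent subsequence the limit functions $\Psi_\pm$ satisfy $\Psi_-(\lambda)\Psi_+(\lambda)=0$ for \emph{every} $\lambda\in\sigma$ (the coupling condition and the uniform bound~\eqref{eqnPhiBound} give $|\Phi_-(\lambda,\delta)\Phi_+(\lambda,\delta)|\leq C_\lambda^2\min(|\eta_\lambda(\delta)|,|\eta_\lambda(\delta)|^{-1})\to 0$, with no need to know which factor vanishes), so that the Herglotz--Nevanlinna function~\eqref{eqnPsiHN} has no poles at all and must reduce to $z$ itself; the Montel/contradiction step is then identical to yours. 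You instead manufacture the hypotheses of Theorem~\ref{thmAA} by a diagonal extraction that assigns a definite sign to $-\ln|\eta_\lambda(\delta_k)|$ for each $\lambda\neq\lambda_0$, and then kill the correction term by noting that $\hat\eta_{\lambda_0}(\delta_k)$ is nonnegative (via~\eqref{eqnHNRes}, the same observation made inside the proof of Theorem~\ref{thmAA}) with $|\ln\hat\eta_{\lambda_0}(\delta_k)|\to\infty$, so that $\min(\hat\eta_{\lambda_0}(\delta_k)^{-1},\hat\eta_{\lambda_0}(\delta_k))\to 0$. That works: the subsequence together with its limit point is a legitimate first-countable index space for the theorem, the conjugation factor is a fixed nonzero real because $\sum_\lambda|\lambda|^{-1}<\infty$ and $\lambda_0\notin\sigma_\pm$, and the sign analysis is sound. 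The paper's route is shorter because the ``no poles'' observation treats all coupling constants symmetrically and avoids both the diagonalization and the special role of $\lambda_0$; your route has the merit of deriving the corollary from the \emph{statement} of Theorem~\ref{thmAA} rather than from its proof, at the price of the extraction machinery.
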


\begin{proof}
 Similarly to the first part of the proof of Theorem~\ref{thmAA}, one infers that 
 \begin{align*}
  \frac{\Phi_-(z,\delta_k)\Phi_+(z,\delta_k)}{W(z)} \rightarrow 1 
 \end{align*}
 as $k\rightarrow\infty$, locally uniformly for all $z\in\C\backslash\sigma$ as long as the functions $\Phi_\pm(\,\cdot\,,\delta_k)$ are assumed to converge locally uniformly. 
 In fact, this follows since the function in~\eqref{eqnPsiHN} is now known to have no poles at all. 
 Now the claim follows in much the same manner as in the second part of the proof of Theorem~\ref{thmAA}, invoking a compactness argument. 
\end{proof}

% % % % % % % % % % % % % % 
\section{Applications to the Camassa--Holm equation}
% % % % % % % % % % % % % % 

As anticipated in the introduction, we will now demonstrate that our results provide a powerful tool to derive long-time asymptotics for the dispersionless Camassa--Holm equation. 
To this end, let $u$ be a solution of  
\begin{align}
  u_t - u_{xxt} = u u_{xxx} - 3 u u_x + 2 u_x u_{xx} 
\end{align}
with decaying spatial asymptotics. 
To be precise, we will assume that the quantities
\begin{align}
 \omega(x,t) = u(x,t) - u_{xx}(x,t), \quad x\in\R,
\end{align}
are finite signed measure for each time $t\geq 0$. 

These conditions guarantee (see \cite[Theorem~3.1]{IsospecCH}) that for every time $t\geq 0$ and  $z\in\C$, there are unique solutions $\phi_\pm(z,\cdot\,,t)$ of the differential equation 
\begin{align}\label{eqnISO}
 -\phi_\pm''(z,x,t) + \frac{1}{4} \phi_\pm(z,x,t) = z\, \omega(x,t) \phi_\pm(z,x,t), \quad x\in\R,
\end{align}
(the prime denotes spatial differentiation) with the spatial asymptotics
\begin{align}
 \phi_\pm(z,x,t) & \sim \E^{\mp\frac{x}{2}}, & \phi_\pm'(z,x,t) & \sim \mp\frac{1}{2} \E^{\mp\frac{x}{2}},
\end{align}
as $x\rightarrow\pm\infty$. 
In view of \cite[Theorem~4.1]{IsospecCH}, it is known that these solutions are real entire and of exponential type zero with respect to the spectral parameter. 

 Now the importance of the spectral problems~\eqref{eqnISO} lies in the well-known fact that their spectra are invariant under the Camassa--Holm flow, that is, they are the same for all times $t\geq0$ (for example, we refer to \cite[Section~2]{bss}, \cite{ch}, \cite[Section~3]{co01}, \cite[Theorem~5.1]{ConservMP}). 
 For this reason, we may simply denote the spectrum of~\eqref{eqnISO} with $\sigma$, which is known to be real and purely discrete such that the sum   
 \begin{align}
  \sum_{\lambda\in\sigma} \frac{1}{|\lambda|}  
 \end{align}
 is finite, in view of \cite[Proposition~3.3]{IsospecCH}.  
 The Wronskian of our two solutions  
 \begin{align}
  W(z) & = \phi_+(z,x,t) \phi_-'(z,x,t) - \phi_+'(z,x,t) \phi_-(z,x,t), \quad z\in\C,
 \end{align}
 turns out to be independent of space $x\in\R$ and time $t\geq0$. 
 Indeed, this function is the characteristic function of the spectral problem~\eqref{eqnISO}, that is,  
 \begin{align}
      W(z)  = \prod_{\lambda\in\sigma} \biggl( 1-\frac{z}{\lambda} \biggr), \quad z\in\C, 
 \end{align}
 in view of \cite[Corollary~4.2]{IsospecCH}. 

In order to point out the connection to the coupling problem for entire functions, one observes that the solutions $\phi_+(\lambda,\cdot\,,t)$ and $\phi_-(\lambda,\cdot\,,t)$ are linearly dependent for every eigenvalue $\lambda\in\sigma$ and time $t\geq 0$. Hence 
 there is some nonzero real $c_\lambda(t)\in\R$ such that we may write 
\begin{align}
  \phi_+(\lambda,x,t) = c_\lambda(t) \phi_-(\lambda,x,t), \quad x\in\R. 
\end{align}
The time evolution for these quantities is known to be given explicitly by 
\begin{align}
 c_\lambda(t) = c_{\lambda}(0) \E^{-\frac{t}{2\lambda}}, \quad t\geq 0,~\lambda\in\sigma. 
\end{align}
More precisely, this follows from the well-known time evolution of the associated norming constants (for example, see \cite[Section~2]{bss}, \cite[Section~3]{co01}, \cite[Theorem~5.1]{ConservMP}) and the identity in \cite[Lemma~3.2]{IsospecCH},
\begin{align}
 - \dot{W}(\lambda) = c_\lambda(t) \int_\R \phi_-(\lambda,x,t)^2 d\omega(x,t), \quad t\geq0,~\lambda\in\sigma,
\end{align}
where the dot denotes differentiation with respect to the spectral parameter. 

We have now collected all necessary ingredients to prove the announced long-time asymptotics for the solution $u$ of the Camassa--Holm equation. 
In fact, the proof of this result is almost immediate from the general results on asymptotic analysis for our coupling problem of entire functions derived in the previous section. 

\begin{theorem}\label{thmLT}
 Let $S\subseteq\R\times(0,\infty)$ be a closed sector which contains at most finitely many of the rays $\mathrm{r}_\lambda$, $\lambda\in\sigma$ given by $2\lambda x = t$.
  Then we have 
 \begin{align}\label{eqnLTu}
  u(x,t) = \sum_{\lambda\in\sigma} \frac{1}{2\lambda} \E^{-\left|x-\frac{t}{2\lambda} + \xi_\lambda\right|} + \oo(1)
 \end{align}
 for $(x,t)\in S$ as $t\rightarrow\infty$,  where the phase shifts $\xi_\lambda$ for each $\lambda\in\sigma$ are given by
 \begin{align}
   \xi_\lambda = \ln|c_{\lambda}(0)| + \sum_{\kappa\in\sigma\backslash\lbrace\lambda\rbrace} \sgn\left( \frac{1}{\lambda} - \frac{1}{\kappa} \right) \ln\left|1-\frac{\lambda}{\kappa}\right|, \quad \lambda\in\sigma.
 \end{align}
\end{theorem}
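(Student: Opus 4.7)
The plan is to encode the time-$t$ isospectral data of $u$ as the input of a coupling problem, extract the long-time behaviour of its solution via Theorem~\ref{thmAA}, and recover $u$ through a suitable reconstruction formula.

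First I would set $\Phi_\pm(z,x,t) = \E^{\pm x/2}\phi_\pm(z,x,t)$, which is entire in $z$ of exponential type zero, and verify that for each fixed $(x,t)$ the pair $(\Phi_-(\,\cdot\,,x,t),\Phi_+(\,\cdot\,,x,t))$ solves the coupling problem with data $\eta_\lambda(x,t) = c_\lambda(0)\,\E^{x-t/(2\lambda)}$. Indeed, (N) follows because \eqref{eqnISO} at $z=0$ forces $\phi_\pm(0,\,\cdot\,,t) = \E^{\mp\,\cdot/2}$; (C) follows from $\phi_+(\lambda,\,\cdot\,,t) = c_\lambda(t)\phi_-(\lambda,\,\cdot\,,t)$ together with $c_\lambda(t) = c_\lambda(0)\E^{-t/(2\lambda)}$; and (G) is equivalent to the Herglotz--Nevanlinna character of the diagonal Green's function $z\phi_-\phi_+/W$, which is available from \cite{IsospecCH}. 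The bridge back to $u$ will be the identity
\[
u(x,t) = \frac{1}{2}\left.\frac{\partial}{\partial z}\,\frac{\Phi_-(z,x,t)\Phi_+(z,x,t)}{W(z)}\right|_{z=0},
\]
which I would justify by observing that $\Phi_-\Phi_+/W = \phi_-\phi_+/W$ is the diagonal Green's function $G(x,x,z,t)$ of~\eqref{eqnISO}, and that a first-order Neumann expansion at $z=0$ -- using that $-\partial_x^2+1/4$ has kernel $\E^{-|\cdot|/2}$ and the convolution representation $u = \tfrac12\,\E^{-|\cdot|}\ast\omega$ -- gives $G(x,x,z,t) = 1 + 2z\,u(x,t) + \OO(z^2)$.

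Next, fix $\lambda_0\in\sigma$ with $\mathrm{r}_{\lambda_0}\subseteq S$ and restrict to a bounded strip around $\mathrm{r}_{\lambda_0}$ inside $S$. On the strip $x-t/(2\lambda_0)$ is bounded, so for each $\lambda\neq\lambda_0$
\[
\ln|\eta_\lambda(x,t)| = \ln|c_\lambda(0)| + \bigl(x-t/(2\lambda_0)\bigr) + \frac{t}{2}\bigl(1/\lambda_0 - 1/\lambda\bigr)
\]
tends to $\pm\infty$ with sign $\sgn(1/\lambda_0 - 1/\lambda)$ as $t\to\infty$. Theorem~\ref{thmAA} therefore applies with $\sigma_\pm = \{\lambda\neq\lambda_0 : \pm(1/\lambda - 1/\lambda_0) > 0\}$; a short computation produces $\ln|\hat\eta_{\lambda_0}(x,t)| = (x-t/(2\lambda_0)) + \xi_{\lambda_0}$ with $\xi_{\lambda_0}$ exactly as in the statement, while the sign constraint~\eqref{eqnHNRes} forces $\hat\eta_{\lambda_0}\geq 0$, so $\min(\hat\eta_{\lambda_0}^{-1},\hat\eta_{\lambda_0}) = \E^{-|x-t/(2\lambda_0) + \xi_{\lambda_0}|}$. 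Since the locally uniform convergence in $z$ provided by Theorem~\ref{thmAA} upgrades via Cauchy's estimates on a small disk around $0$ disjoint from $\sigma$ to convergence of the $z$-derivative at $0$, the reconstruction formula yields
\[
u(x,t) = \frac{1}{2\lambda_0}\,\E^{-|x-t/(2\lambda_0) + \xi_{\lambda_0}|} + \oo(1)
\]
inside the strip, and the remaining summands in the target sum are exponentially small there and absorb into $\oo(1)$.

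Outside the union of these (finitely many) strips one has $|\ln|\eta_\lambda||\to\infty$ for \emph{every} $\lambda\in\sigma$, so Corollary~\ref{corAA} yields $\Phi_-\Phi_+/W\to 1$, hence $u\to 0$, while every peakon summand is simultaneously $\oo(1)$. A compactness/Montel argument -- in the spirit of the concluding paragraph of the proof of Theorem~\ref{thmAA} -- glues these local statements into the claimed uniform asymptotics on $S$. The main obstacle I foresee is the reconstruction formula itself: its justification requires importing the Herglotz--Nevanlinna character of $z\phi_-\phi_+/W$ and the clean small-$z$ expansion of the diagonal Green's function from the underlying isospectral theory; once these are in hand, the algebraic identification of the phase shifts $\xi_\lambda$ and the sector uniformity amount to book-keeping.
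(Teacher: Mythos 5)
Your setup coincides with the paper's: the same functions $\Phi_\pm(z,x,t)=\E^{\pm x/2}\phi_\pm(z,x,t)$, the same coupling constants $\eta_\lambda(x,t)=c_\lambda(0)\E^{x-t/(2\lambda)}$, the same verification of (C), (G), (N), and the same identification $\ln|\hat\eta_{\lambda_0}|=x-t/(2\lambda_0)+\xi_{\lambda_0}$ with $\hat\eta_{\lambda_0}>0$ via \eqref{eqnHNRes}. The one place you go beyond the paper is the reconstruction formula $u=\tfrac12\,\partial_z(\Phi_-\Phi_+/W)|_{z=0}$, which the paper simply imports as \cite[Lemma~3.4]{IsospecCH}; your derivation via the small-$z$ expansion of the diagonal Green's function is the right idea (note you also need the trace identity $\dot W(0)=-\omega(\R,t)$ to cancel the constant terms), and the upgrade from locally uniform convergence to convergence of the derivative at $0$ via Cauchy estimates is fine.

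The genuine gap is in your decomposition of the sector. You apply Theorem~\ref{thmAA} only on a \emph{bounded strip} around each ray $\mathrm{r}_{\lambda_0}\subseteq S$ and then claim that outside the union of these strips one has $|\ln|\eta_\lambda||\to\infty$ for \emph{every} $\lambda\in\sigma$, so that Corollary~\ref{corAA} applies. That claim is false: for a $\lambda_0$ whose ray lies in $S$, points of $S$ outside the strip satisfy only $|x-t/(2\lambda_0)|>C$, a fixed lower bound, so $|\ln|\eta_{\lambda_0}(x,t)||$ stays bounded along, e.g., $x=t/(2\lambda_0)+C+1$ and does not diverge. Hence the hypothesis of Corollary~\ref{corAA} fails on the complement of the strips, and no choice of finite strip width repairs this without a quantitative error bound that the corollary does not provide. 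The remedy is exactly what Theorem~\ref{thmAA} was built for (see the remark following it in the paper): it places \emph{no} assumption on the distinguished coupling constant $\eta_{\lambda_0}$, so it applies on an entire closed sub-sector containing precisely the one ray $\mathrm{r}_{\lambda_0}$ --- there $-\ln|\eta_\lambda|=-t(x/t-1/(2\lambda))-\ln|c_\lambda(0)|\to\pm\infty$ for all $\lambda\in\sigma_\pm$ because the slopes are separated, with no boundedness of $x-t/(2\lambda_0)$ needed --- and its conclusion \eqref{eqnConv} interpolates automatically between the peakon regime near the ray and the decaying regime away from it, since $\min(\hat\eta_{\lambda_0}^{-1},\hat\eta_{\lambda_0})=\E^{-|x-t/(2\lambda_0)+\xi_{\lambda_0}|}$ is uniformly small far from $\mathrm{r}_{\lambda_0}$. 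Covering $S$ by finitely many such sub-sectors then gives the uniform statement; your strip-plus-complement gluing, as written, does not.
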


\begin{proof}
 For every $x\in\R$ and $t\geq 0$ we introduce the entire functions
 \begin{align*}
  \Phi_\pm(z,x,t) = \E^{\pm\frac{x}{2}} \phi_\pm(z,x,t), \quad z\in\C, 
 \end{align*}
 which will turn out to be a solution of a particular coupling problem. 
 In fact, one clearly has the coupling condition (C) 
 \begin{align}\label{eqnLTcc}
  \Phi_+(\lambda,x,t) = \E^{x-\frac{t}{2\lambda}} c_\lambda(0) \Phi_-(\lambda,x,t), \quad \lambda\in\sigma. 
 \end{align}
 Moreover, due to \cite[Proposition~4.4]{LeftDefiniteSL}, the function 
 \begin{align*}
  \frac{z\, \Phi_-(z,x,t) \Phi_+(z,x,t)}{W(z)} = \left( \frac{\phi_-'(z,x,t)}{z\phi_-(z,x,t)} - \frac{\phi_+'(z,x,t)}{z\phi_+(z,x,t)} \right)^{-1}, \quad z\in\C\backslash\R,
 \end{align*}
 is a Herglotz--Nevanlinna function, ensuring the growth and positive condition (G). 
 In fact, this can also be verified by a direct calculation, using the differential equation~\eqref{eqnISO}. 
 Finally, the normalization condition (N) is immediate from the definition. 
 
 We will first consider the special case when the sector $S$ contains precisely one ray, say $\mathrm{r}_{\lambda_0}$ for some $\lambda_0\in\sigma$. 
 Upon defining the sets $\sigma_\pm\subseteq\sigma$ by
 \begin{align*}
   \sigma_\pm = \left\lbrace \lambda\in\sigma \,|\, \pm \lambda_0^{-1} < \pm \lambda^{-1} \right\rbrace,
 \end{align*}
 one obtains a partition $\sigma_-\cup\lbrace\lambda_0\rbrace\cup\sigma_+$ of $\sigma$ such that
 \begin{align*}
  \mp \biggl(\frac{x}{t} - \frac{1}{2\lambda}\biggr) > \varepsilon_\lambda, \quad (x,t)\in S,~\lambda\in\sigma_\pm, 
 \end{align*} 
 for some $\varepsilon_\lambda>0$, $\lambda\in\sigma_\pm$. 
 Therefore, the coupling constants in~\eqref{eqnLTcc} satisfy
 \begin{align*}
   -\ln \left|\E^{x-\frac{t}{2\lambda}} c_\lambda(0)\right| = -\left(\frac{x}{t}-\frac{1}{2\lambda}\right) t - \ln|c_\lambda(0)| \rightarrow \pm\infty, \quad \lambda\in\sigma_\pm,
 \end{align*} 
 for $(x,t)\in S$ as $t\rightarrow\infty$. 
 In view of \cite[Lemma~3.4]{IsospecCH} and Theorem~\ref{thmAA} this yields
 \begin{align*}
  u(x,t) = \frac{1}{2} \left. \frac{\partial}{\partial z} \frac{\Phi_-(z,x,t) \Phi_+(z,x,t)}{W(z)}\right|_{z=0} = \frac{1}{2\lambda_0} \E^{-\left| x - \frac{t}{2\lambda_0} + \xi_{\lambda_0} \right|}  + \oo(1)
 \end{align*}
 for $(x,t)\in S$ as $t\rightarrow\infty$. 
 But this proves the claim in this special case, since 
 \begin{align*}
  \sum_{\lambda\in\sigma\backslash\lbrace\lambda_0\rbrace} \frac{1}{2\lambda} \E^{-\left|x-\frac{t}{2\lambda} + \xi_\lambda\right|} = \oo(1)
 \end{align*}
 for $(x,t)\in S$ as $t\rightarrow\infty$, as Lebesgue's dominated convergence theorem shows. 
 
 In order to finish the proof in the general case, notice that under our assumptions we may cover the sector $S$ with finitely many sectors of the type considered above. 
\end{proof}

 The typical long-time behavior of a solution $u$ of the Camassa--Holm equation, derived in Theorem~\ref{thmLT} can be depicted as follows:
\begin{center}
 \begin{tikzpicture}[domain=-1:9, samples=101]

\fill[black!12] (4,0) --  (8.8, 1) -- (8.8,4.7) -- (5,4.7) -- cycle;
\fill[black!12] (4,0) --  (-0.8, 1) -- (-0.8,4.7) -- (3,4.7) -- cycle;
\draw[color=red] plot (\x,{ 3 + 0.9*exp(-6*abs(\x-7.)) + 0.5*exp(-6*abs(\x-5.9)) + 0.3*exp(-6*abs(\x-4.9)) - 1.4*exp(-6*abs(\x-1)) - 0.7*exp(-6*abs(\x-2.2)) - 0.2*exp(-6*abs(\x-3.1))+ 0.1*exp(-6*abs(\x-4.5)) - 0.1*exp(-6*abs(\x-3.6)) }) node[right] {$u(x,t)$};
\draw[color=white,domain=3.5:4.5] plot (\x,{ 3 + 0.9*exp(-6*abs(\x-7.)) + 0.5*exp(-6*abs(\x-5.9)) + 0.3*exp(-6*abs(\x-4.9)) - 1.4*exp(-6*abs(\x-1)) - 0.7*exp(-6*abs(\x-2.2)) - 0.2*exp(-6*abs(\x-3.1))+ 0.1*exp(-6*abs(\x-4.5)) - 0.1*exp(-6*abs(\x-3.6)) }) node[right] {};

\draw[-] (4,0) -- (9,2) node[right] {$\mathrm{r}_{\lambda_1}$};
\draw[-] (4,0) -- (9,5) node[right] {$\mathrm{r}_{\lambda_2}$};
\draw[-] (4,0) -- (7,5) node[right] {$\mathrm{r}_{\lambda_3}$};
\draw[-] (4,0) -- (5.5,5) node[right] {$\mathrm{r}_{\lambda_4}$};
\draw[-] (4,0) -- (4.7,5) node[right] {};
\draw[-, color=black!60] (4,0) -- (4.3,5) node[right] {};
\draw[-, color=black!40] (4,0) -- (4.12,5) node[right] {};
\draw[-, color=black!20] (4,0) -- (4.05,5) node[right] {};
\draw[-, color=black!10] (4,0) -- (4.02,5) node[right] {};

\draw[-] (4,0) -- (-1,2.4) node[left] {$\mathrm{r}_{\lambda_{-1}}$};
\draw[-] (4,0) -- (-1,4.8) node[left] {$\mathrm{r}_{\lambda_{-2}}$};
\draw[-] (4,0) -- (0.8,5) node[left] {$\mathrm{r}_{\lambda_{-3}}$};
\draw[-] (4,0) -- (2.4,5) node[left] {$\mathrm{r}_{\lambda_{-4}}$};
\draw[-] (4,0) -- (3.5,5) node[left] {};
\draw[-, color=black!60] (4,0) -- (3.7,5) node[left] {};
\draw[-, color=black!40] (4,0) -- (3.88,5) node[left] {};
\draw[-, color=black!20] (4,0) -- (3.95,5) node[left] {};
\draw[-, color=black!10] (4,0) -- (3.98,5) node[left] {};

\draw[->] (-1,0) -- (9,0) node[right] {$x$};
\draw[->] (4,-0.2) -- (4,5.2) node[above] {$t$};
\end{tikzpicture}
\end{center}
Hereby, the grey areas represent two sectors in which our long-time asymptotics hold uniformly. 
Each of the rays $\mathrm{r}_\lambda$, accumulating at the $t$-axis, corresponds to an eigenvalue $\lambda\in\sigma$ of the underlying isospectral problem. 
 After long time, one can see that the solution $u$ splits into a train of single peakons, each of which travels along one of the rays, with height and speed determined by the corresponding eigenvalue. 

Due to the conditions on the sector in Theorem~\ref{thmLT}, we do not obtain long-time asymptotics of solutions, which hold uniformly in sectors around the $t$-axis (as long as the spectrum is not finite, that is, in the multi-peakon case). 
 However, we are able to derive long-time asymptotics which hold uniformly in strips near the $t$-axis, that is, as long as $x$ stays bounded.  

\begin{corollary}\label{corLTt}
 Given some $R>0$, one has 
 \begin{align}\label{eqnLTfixedx}
  u(x,t) = \oo(1) 
 \end{align}
 for $|x|\leq R$ as $t\rightarrow\infty$. 
\end{corollary}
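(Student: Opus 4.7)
The strategy is to mimic the proof of Theorem~\ref{thmLT}, but this time to invoke Corollary~\ref{corAA} in place of Theorem~\ref{thmAA}, since in this regime no single eigenvalue will be singled out as a distinguished $\lambda_0$. I would reuse the exact same coupling problem as before: the entire functions $\Phi_\pm(z,x,t) = \E^{\pm \frac{x}{2}} \phi_\pm(z,x,t)$ together with the coupling constants $\eta_\lambda(x,t) = \E^{x-\frac{t}{2\lambda}} c_\lambda(0)$.

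The key observation is that for each fixed $\lambda\in\sigma$,
\begin{align*}
 \bigl|\ln|\eta_\lambda(x,t)|\bigr| \;\geq\; \frac{t}{2|\lambda|} - R - |\ln|c_\lambda(0)|| \;\longrightarrow\; \infty
\end{align*}
uniformly in $|x|\leq R$ as $t\rightarrow\infty$. To apply Corollary~\ref{corAA} in a way that automatically yields this uniformity, I would introduce the first-countable topological space $\Delta = \{(x,t) : |x|\leq R,\, t\geq 0\}\cup\{\delta_\infty\}$, declaring a neighborhood basis of $\delta_\infty$ to consist of the sets $\{(x,t) : t>T\}\cup\{\delta_\infty\}$ for $T\geq 0$. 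In this topology, $\delta\rightarrow\delta_\infty$ encodes precisely $t\rightarrow\infty$ uniform in the strip, so Corollary~\ref{corAA} yields
\begin{align*}
 \frac{\Phi_-(z,x,t)\Phi_+(z,x,t)}{W(z)} \;\longrightarrow\; 1
\end{align*}
locally uniformly for $z\in\C\backslash\sigma$. Extracting $u$ from this ratio by the same appeal to \cite[Lemma~3.4]{IsospecCH} used in the proof of Theorem~\ref{thmLT} then gives $u(x,t) = \oo(1)$ uniformly in $|x|\leq R$ as $t\rightarrow\infty$.

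It remains to check that the right-hand side of~\eqref{eqnLTfixedx} is itself $\oo(1)$ uniformly on the strip. This is elementary: for any $\varepsilon>0$ pick $M$ so large that $\sum_{|\lambda|>M} \frac{1}{2|\lambda|} < \varepsilon/2$, so that the tail of the series is pointwise bounded by $\varepsilon/2$. The remaining finitely many terms indexed by $|\lambda|\leq M$ each vanish uniformly in $|x|\leq R$ as $t\rightarrow\infty$, since $|x - \frac{t}{2\lambda} + \xi_\lambda| \geq \frac{t}{2|\lambda|} - R - |\xi_\lambda| \rightarrow \infty$. Together with the previous paragraph this shows that both sides of~\eqref{eqnLTfixedx} are $\oo(1)$ uniformly on the strip, as required.

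The only delicate point is the uniformity across $|x|\leq R$. The topological space $\Delta$ above handles this cleanly, but an equivalent route is a subsequence/contradiction argument: a failure of the corollary would produce a sequence $(x_n,t_n)$ with $|x_n|\leq R$ and $t_n\rightarrow\infty$ along which the difference in~\eqref{eqnLTfixedx} stays bounded below, and applying Corollary~\ref{corAA} along this sequence (noting that $|\ln|\eta_\lambda(x_n,t_n)||\rightarrow\infty$ for every $\lambda\in\sigma$) immediately forces a contradiction. Beyond this formal point there is no essential obstacle: the corollary is the natural counterpart of Theorem~\ref{thmLT} for the regime near the $t$-axis, where \emph{no} peakon ray is ever isolated and the solution decays locally to zero.
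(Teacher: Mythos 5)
Your proposal is correct and follows essentially the same route as the paper: apply Corollary~\ref{corAA} with the coupling constants $\E^{x-\frac{t}{2\lambda}}c_\lambda(0)$, all of whose logarithms blow up in modulus uniformly on the strip, conclude $u(x,t)=\oo(1)$ via \cite[Lemma~3.4]{IsospecCH}, and then check that the right-hand side of~\eqref{eqnLTfixedx} also tends to zero. Your explicit tail-splitting of the sum is just the dominated convergence argument the paper invokes, and your topological/subsequence device for the uniformity in $x$ is a spelled-out version of what the paper leaves implicit.
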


\begin{proof}
 With the notation from the proof of Theorem~\ref{thmLT} we see that the coupling constants in~\eqref{eqnLTcc} for every $\lambda\in\sigma$ satisfy  
 \begin{align*}
    \left| \ln \left|\E^{x-\frac{t}{2\lambda}} c_\lambda(0)\right| \right| = \left| x-\frac{t}{2\lambda} + \ln|c_\lambda(0)|  \right| \rightarrow \infty
 \end{align*}
 as $t\rightarrow\infty$. 
 Therefore, an application of Corollary~\ref{corAA} shows that 
 \begin{align*}
  u(x,t) = \oo(1)
 \end{align*}
 as $t\rightarrow\infty$, in view of \cite[Lemma~3.4]{IsospecCH}. 
\end{proof}

Note that \eqref{eqnLTfixedx} is consistent with \eqref{eqnLTu} since
\[
\sum_{\lambda\in\sigma} \frac{1}{2\lambda} \E^{-\left|x-\frac{t}{2\lambda} + \xi_\lambda\right|} = \oo(1) 
\]
for $|x|\leq R$ as $t\rightarrow\infty$ by virtue of Lebesgue's dominated convergence theorem.

The fact that the limit of $u(x,t)$ vanishes for every fixed $x\in\R$ as $t\rightarrow\infty$ was established in \cite[Theorem~1.2]{xizh00}, \cite[Theorem~3]{xizh02} for certain weak solutions of the Camassa--Holm equation under various additional assumptions.

\bigskip
\noindent
{\bf Acknowledgments.}
We gratefully acknowledge the kind hospitality of the {\em Institut Mittag-Leffler} (Djursholm, Sweden) during the scientific program on {\em Inverse Problems and Applications} in spring 2013, where most of this article was written.

\end{document}